\documentclass[sigconf]{acmart}
\AtBeginDocument{%
  \providecommand\BibTeX{{%
    Bib\TeX}}}

\setcopyright{none}
\renewcommand\footnotetextcopyrightpermission[1]{}

\newtheorem{theorem}{Theorem}
\usepackage[abbreviations]{glossaries-extra}
\usepackage{soul}
\usepackage{subcaption}
\usepackage{booktabs}
\usepackage{multirow}
\usepackage{rotating}
\makeglossaries
\usepackage{xcolor}
\usepackage{amsfonts}
\usepackage{mathtools}
\usepackage{algorithmic}
\usepackage[colorinlistoftodos]{todonotes}
\def\BibTeX{{\rm B\kern-.05em{\sc i\kern-.025em b}\kern-.08em
    T\kern-.1667em\lower.7ex\hbox{E}\kern-.125emX}}
\usepackage{tikz}
\usetikzlibrary{arrows.meta}

\begin{document}
\renewcommand{\dbltopfraction}{0.9}
\renewcommand{\dblfloatpagefraction}{0.8}
\setcounter{dbltopnumber}{2}

\title{MQSS-Selector: RL-Guided Pass Selection for an MLIR Compilation Pipeline}
\titlenote{Code available at:
  \url{https://anonymous.4open.science/r/Pass_selector-10AE/README.md}}

\author{A. Youssefi}
\affiliation{%
  \institution{Leibniz Supercomputing Centre (LRZ)}
  \city{Garching bei München}
  \country{Germany}}
\email{youandre100@googlemail.com}

\author{E. Kaya}
\affiliation{%
  \institution{Leibniz Supercomputing Centre (LRZ), Technical University of Munich (TUM)}
  \city{Garching bei München}
  \country{Germany}}
\email{ercument.kaya@tum.de}

\author{M. Chung}
\affiliation{%
  \institution{Leibniz Supercomputing Centre (LRZ)}
  \city{Garching bei München}
  \country{Germany}}
\email{minh.chung@lrz.de}

\author{J. Echavarria}
\orcid{0000-0002-3751-5273}
\affiliation{%
  \institution{Munich Quantum Valley (MQV)}
  \city{Garching bei München}
  \country{Germany}}
\email{jorge.echavarria@lrz.de}

\author{L. B. Schulz}
\orcid{0000-0002-4702-3440}
\affiliation{%
  \institution{Argonne National Laboratory (ANL)}
  \city{Lemont, IL}
  \country{United States}}
\email{schulz@anl.gov}

\author{M. Schulz}
\orcid{0000-0001-9013-435X}
\affiliation{%
  \institution{Leibniz Supercomputing Centre (LRZ), Technical University of Munich (TUM)}
  \city{Garching bei München}
  \country{Germany}}
\email{schulzm@in.tum.de}

\newabbreviation{api}{API}{Application Programming Interface}
\newabbreviation{a3c}{A3C}{Asynchronous Advantage Actor-Critic}
\newabbreviation{cpu}{CPU}{Central Processing Unit}
\newabbreviation{dl}{DL}{Deep Learning}
\newabbreviation{dse}{DSE}{Design Space Exploration}
\newabbreviation{eqs3}{EQS3}{European Quantum Systems and Software Summit}
\newabbreviation{gpu}{GPU}{Graphics Processing Unit}
\newabbreviation{gae}{GAE}{General Advantage Estimation}
\newabbreviation{hpc}{HPC}{High Performance Computing}
\newabbreviation{hpcqc}{HPCQC}{High Performance Computing-Quantum Computing}
\newabbreviation{ir}{IR}{Intermediate Representation}
\newabbreviation{isv}{ISV}{Independent Software Vendor}
\newabbreviation{jsd}{JSD}{Joint-Schedule Decision Problem}
\newabbreviation{jit}{JIT}{Just-In-Time}
\newabbreviation{knn}{KNN}{K Nearest Neighbors}
\newabbreviation{lrz}{LRZ}{Leibniz Supercomputing Centre}
\newabbreviation{lstm}{LSTM}{Long Short Term Memory}
\newabbreviation{mae}{MAE}{Mean Absolute Error}
\newabbreviation{ml}{ML}{Machine Learning}
\newabbreviation{mlir}{MLIR}{Multi-Level Intermediate Representation}
\newabbreviation{mlp}{MLP}{Multi-Layer Perceptrons}
\newabbreviation{mqss}{MQSS}{Munich Quantum Software Stack}
\newabbreviation{mqt}{MQT}{Munich Quantum Toolkit}
\newabbreviation{mqv}{MQV}{Munich Quantum Valley}
\newabbreviation{mse}{MSE}{Mean Square Error}
\newabbreviation{nisq}{NISQ}{Noisy Intermediate-Scale Quantum}
\newabbreviation{qc}{QC}{Quantum Computing}
\newabbreviation{qdessi}{Q-DESSI}{Quantum Development Environment, System Software \& Integration}
\newabbreviation{qec}{QEC}{Quantum Error Correction}
\newabbreviation{qdmi}{QDMI}{Quantum Device Management Interface}
\newabbreviation{qir}{QIR}{Quantum Intermediate Representation}
\newabbreviation{qpi}{QPI}{Quantum Programming Interface}
\newabbreviation{qpu}{QPU}{Quantum Processing Unit}
\newabbreviation{qrm}{QRM}{Quantum Resource Manager}
\newabbreviation{qrmi}{QRMI}{Quantum Resource Management Interface}
\newabbreviation{qnn}{QNN}{Quantum Neural Network}
\newabbreviation{rf}{RF}{Random Forest}
\newabbreviation{rl}{RL}{Reinforcement Learning}
\newabbreviation{rnn}{RNN}{Recurrent Neural Networks}
\newabbreviation{svm}{SVM}{Support Vector Machine}
\newabbreviation{tsp}{TSP}{Traveling Salesman Problem}
\newabbreviation{td}{TD}{Temporal Difference}
\newabbreviation{vqe}{VQE}{Variational Quantum Eigensolver}

\begin{abstract}
\gls{hpc} and \gls{qc} systems are increasingly converging towards unified \gls{hpcqc} infrastructures, driven by a growing need to bridge classical and quantum workflows, which affects all levels of the system stack, from the hardware to compilers and runtimes, all the way to applications.
However, today's \gls{qc} devices are still in the \gls{nisq} era, are error-prone and resource-limited, and therefore require specialized optimizations and topology mappings to achieve sufficient fidelity. This places special emphasis on proper compilation and optimization within the overall quantum software stack.
Many existing stacks remain fragmented, with separate components responsible for device selection, compiler-pass optimization, and job queue scheduling.
This paper proposes a unified, learning-based selector that integrates these disparate stages into a cohesive framework.
Our proposed selector scheme leverages reinforcement learning and deep learning models that can be extended to simultaneously optimize multiple objectives---such as fidelity, compilation time, and scheduling latency---while dynamically adapting to circuit characteristics and device conditions.
\end{abstract}

\begin{CCSXML}
<ccs2012>
   <concept>
       <concept_id>10010583.10010786.10010813.10011726</concept_id>
       <concept_desc>Hardware~Quantum computation</concept_desc>
       <concept_significance>500</concept_significance>
       </concept>
   <concept>
       <concept_id>10011007.10011006.10011041.10011044</concept_id>
       <concept_desc>Software and its engineering~Just-in-time compilers</concept_desc>
       <concept_significance>500</concept_significance>
       </concept>
   <concept>
       <concept_id>10003752.10010070.10010071.10010261.10010272</concept_id>
       <concept_desc>Theory of computation~Sequential decision making</concept_desc>
       <concept_significance>300</concept_significance>
       </concept>
 </ccs2012>
\end{CCSXML}

\ccsdesc[500]{Hardware~Quantum computation}
\ccsdesc[500]{Software and its engineering~Just-in-time compilers}
\ccsdesc[300]{Theory of computation~Sequential decision making}

\keywords{HPC, Quantum Computing, HPCQC, Quantum Compilation, Quantum
  Circuit Optimization, Scheduling, Machine Learning, Reinforcement Learning}

\maketitle
\pagestyle{plain}
\thispagestyle{plain}

\glsresetall

\section{Introduction} 
\label{sec:intro}

As \gls{hpc} and \gls{qc} systems converge, the design of software infrastructures that support seamless integration of classical and quantum workloads becomes critical.
In hybrid \gls{hpcqc} environments, quantum processors are typically accessed as off-node accelerators, where quantum programs are offloaded for execution.

The execution pipeline requires appropriate scheduling and quantum device selection that can select the right resources, coordinate classical-quantum tasks, and optimize performance. State-of-the-art heterogeneous \gls{hpcqc} stacks, such as the \gls{hpcqc} efforts at RIKEN in connection with IBM~\cite{swayne2025ibm} devices, the software stacks originating from the Chicago Quantum Exchange~\cite{cqe2025advancing}, the Quantum Delta in the Netherlands~\cite{10.1145/2903150.2906827}, or the \gls{mqss} developed in the \gls{mqv} initiative~\cite{10313717}, adopt one- or even two-level scheduling approaches. Specifically, the first step is with a \gls{hpc} scheduler (e.g., SLURM~\cite{yoo2003slurm}) that allocates classical resources, followed by a second-level quantum scheduler for selecting the target device and optimizing the task for the device.

The second step is typically compilation and optimization for quantum
circuits. We employ MLIR~\cite{lattner2020mlir}, a compiler framework that
handles multiple IRs on different levels via the abstraction of dialects,
which makes it extensible to new domains. Program modifications and analyses
are implemented as passes, which are composable and reusable components. In
the context of our work, we define a pass as an operation applied to a quantum
program that either transforms it into an equivalent representation or
performs a data analysis on the job. In contrast to classical systems,
applying quantum compilation passes can be done at runtime, as critical input
parameters are not known statically.

\begin{figure*}[t!]
  \centering
  \includegraphics[width=\textwidth]{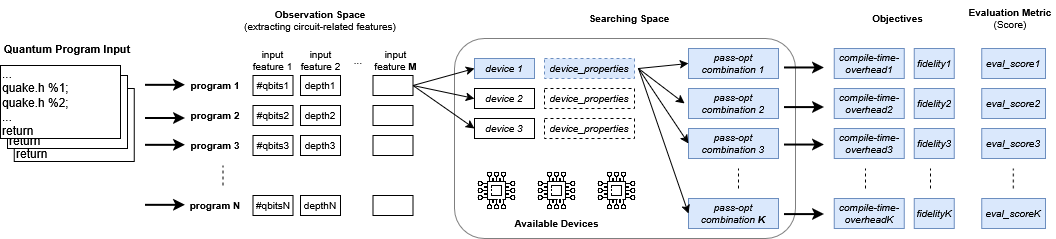}
  \caption{Illustration of the scheduling pipeline integrating \textit{device selection} and \textit{phase ordering} (i.e., compiler pass-sequence optimization) for a quantum program execution.}
  \label{fig:problem_space}
\end{figure*}

The current challenges of \gls{nisq} devices are their inherently error-prone nature, devices are constrained in qubit count, and highly sensitive to calibration and workload conditions~\cite {preskill2018nisq}.
The software environments must be carefully designed in order to mitigate these limitations to the extent possible.
In particular, quantum programs must be compiled and optimized prior to execution to tailor them to the selected system, tune to the system's topology, and account for the current calibration data.
Compilation and optimization, therefore, involve a complex software stack capable of achieving  Pareto-optimal quantum device selection, while also ensuring efficient and robust \gls{jit} compilation and optimization. 
Generally, as mentioned, this is achieved through a series of unitary-preserving transformations or passes.

Device and compilation pass selection is considered a critical challenge, as both involve two NP-hard problems, namely device selection and phase ordering problems~\cite {dangwal2024compass,wang2024phaseorderingproblemfinding}. These two problems can be seen as isolated challenges. In which, phase ordering implies a valid and effective order of passes for transforming quantum programs.
Conventionally, related works often apply heuristics or static policies without coordination across stages~\cite{MQT_Pred}.

Crucially, device selection and phase ordering problems are both inherently \emph{multi-objective combinatorial problems} with tightly \emph{interdependent} design spaces:
On the one hand, device selection involves choosing a target device from a set of available quantum devices, where each device may differ in topology, gate fidelity, queuing delay, and scheduled maintenance windows.
The selection's objectives may include, for example, maximizing compatibility between an arbitrary quantum program and a chosen quantum device, maximizing device availability, minimizing queue wait time, and meeting time-varying availability constraints based on device workload and maintenance schedules.
On the other hand, phase ordering is similarly a multi-objective problem: choosing a sequence from a set of compilation and optimization passes, potentially with repetition, to minimize \gls{jit} compilation and execution runtime, and to produce a program of adequate quality and device compliance.
We propose a unified scheduling framework called MQSS-Selector, which addresses the stages of the quantum execution pipeline from device selection to program transformation together. Scheduling in this context focuses on selecting quantum devices and compilation passes. 
We use a \gls{rl}-based approach to develop an adaptive selector. 
The selector's objective is to make decisions based on program characteristics, available compilation passes and quantum backends.

The contributions of this paper are as follows:
\begin{itemize}
    \item \textbf{ MQSS-Selector}: an \gls{rl}-based \textit{Pass-Selector} module that learns to choose and order compilation passes for quantum circuits, demonstrating the feasibility of data-driven, MLIR-native pass selection --- in contrast to prior work, which largely targets Qiskit-based pass pipelines.
    \item \textbf{Dual-Annealed Exploration Priming (DAEP)}: a training strategy for the reinforcement learning agent that improves efficiency during training.
    \item A set of analysis metrics for the agent's actor and critic networks, enabling evaluation of the learned policy independently of other compiler tools.
    \item An extension of the \gls{mqss} X Compilation Pass Suite with a considerably larger set of \gls{mlir}-based passes.
   
\end{itemize}

\section{Problem Statement and Motivation}\label{sec:prob_statement}

The theoretical problem statement can be understood by the theorem below.

\begin{theorem}\label{theorem:partition}
The problems \textit{Device selection} and \textit{Phase ordering} are NP-hard, and they both reduce to an NP‑hard combinatorial optimization problem and jointly determine Pareto‑optimal compilation results.
\end{theorem}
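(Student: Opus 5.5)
We prove the statement in three parts: NP-hardness of device selection, NP-hardness of phase ordering, and the claim that the two jointly determine the Pareto front of compilation results. In each hardness part we first fix a decision version of the problem. We then give a polynomial-time reduction from a known NP-hard combinatorial problem. Since the theorem is informal, the proof will fix concrete formal models along the way and state them explicitly.

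\emph{Device selection.} We model it as follows. There is a set of jobs (quantum programs), each with a qubit demand and a duration. There is a set of devices, each with a capacity, a queue, and maintenance windows. The objective is to minimize a cost such as makespan, total queue latency, or infidelity. To prove hardness we reduce from \textsc{Partition}. Given integers $a_1,\dots,a_n$ with sum $2B$, we create two identical devices and $n$ jobs with durations $a_i$. A schedule with makespan $\le B$ then exists if and only if the integers admit a perfect partition. This is the classical argument that $P2\|C_{\max}$ is NP-hard, and we should invoke it as such. For the single-program version, one can instead use fidelity-aware qubit placement. Embedding the program's interaction graph into the device coupling graph with no SWAPs generalizes \textsc{Subgraph Isomorphism}. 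Choosing a device and a mapping is therefore at least as hard as deciding whether some device admits such an embedding.

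\emph{Phase ordering.} We model it as choosing a sequence of at most $k$ passes, with repetition allowed, from a finite pass set. Each pass is a polynomial-time circuit transformation, and the goal is to reach a target cost such as gate count or depth. We reduce from a known NP-hard sequencing problem. The plan is to encode an instance of \textsc{Set Cover} or a Hamiltonian-path/\textsc{TSP} ordering problem. For \textsc{Set Cover}, pass $P_j$ rewrites (cancels) exactly the gadget gates corresponding to elements of $S_j$, so the circuit reaches cost $0$ within $k$ passes if and only if a $k$-cover exists. Alternatively, we can cite the known NP-hardness of phase ordering from the literature referenced in the introduction~\cite{wang2024phaseorderingproblemfinding}. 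We can also note that even a single pass, optimal qubit routing, is already NP-hard.

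\emph{Joint Pareto optimality.} We define a compilation result as a pair (device, pass sequence) scored by a vector of objectives: fidelity, compilation time, and queue latency. The two choices interact, because the objective values of a pass sequence depend on the device's topology and calibration. Consequently, the Pareto set of the joint problem is not, in general, the product of the Pareto sets of the separate problems. We demonstrate this with a two-device, two-sequence counterexample and conclude that the joint optimization is necessary. The joint problem contains each subproblem as a special case, obtained by fixing a single device or the empty pass set, so it is NP-hard as well.

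The main obstacle is the phase-ordering reduction. Our gadget must be realized by passes that are genuine, semantics-preserving (unitary-preserving) circuit rewrites, and the passes must not interact unintentionally. The first attempt will use disjoint qubit registers per element and commuting cancellation passes, which should make the cover correspondence exact.
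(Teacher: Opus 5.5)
Your proposal is correct at the level of rigor the informal statement allows, and its overall structure matches the paper's. The paper also reduces \textsc{Partition} to allocation on two identical devices for device selection. For phase ordering, your fallback of citing Wang et al.\ is exactly the paper's whole argument. The paper also closes with the same special-case argument for the joint problem.

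The differences are in the details, and they mostly favor you.

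\textbf{Device selection.} The paper encodes the integers as normalized infidelities $1-F_H = a_i/a_{\max}$. This forces it to assume that infidelities add up on a device and are identical across devices, and to argue about the precision of the normalized values. Your makespan encoding is the textbook $P2\|C_{\max}$ reduction and needs none of this. Your \textsc{Subgraph Isomorphism} side remark actually places the hardness in qubit mapping, which the paper treats as a pass, rather than in the choice of device.

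\textbf{Phase ordering.} The paper gives no construction, only a citation to an undecidability result. Your \textsc{Set Cover} gadget is a self-contained NP-hardness proof. Be clear about what it covers, though:
\begin{itemize}
\item The passes $P_j$ are built from the sets $S_j$, so the gadget applies only when the pass set is part of the input, not to a fixed suite such as the MQSS passes.
\item Because your passes commute, the hardness comes from choosing a budgeted subset of passes, not from ordering them.
\item The length bound $k$ is essential: without it, applying every pass trivially suffices.
\end{itemize}

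\textbf{Joint Pareto claim.} The paper's proof never substantiates the clause that the two problems ``jointly determine Pareto-optimal compilation results''. Your two-device, two-sequence counterexample would. First, though, you need to say what the Pareto set of the separate problems means, since the objective values of a pass sequence depend on the device. One option is to compare against a pipeline that commits to a device first and only then optimizes passes on it.
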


\begin{proof}
    We use the classic \textit{Partition} problem---that is, deciding whether a multiset of positive integers can be split into two subsets of equal sum---as our hardness base. 
    \textit{Partition} is well known to be NP‐complete~\cite{mertens2003easiesthardproblemnumber}.

    \begin{enumerate}
        \item \textit{Device selection is NP‑hard}: The task of choosing which quantum accelerator to use for a set of programs can be described as a scheduling problem---each program (job) has a ``cost'' (e.g., fidelity, runtime, or resource use) and each device has limited capacity.
        Assigning an arbitrary program %
        to a target device %
        for optimizing makespan or throughput is a classical scheduling problem, known to be NP-hard~\cite{Seitz_2024}.
        In particular, we can reduce the NP-complete \textit{Partition} problem to device selection: given integers $a_1,\dots,a_n$. By normalizing them with the linear function $a_i^*=a_i/a_{max}$, the values get mapped to the interval $[0,1]$. Since $a_{max}$ is part of the input, encoding each normalized value $a_i^*$ ---and hence the calibration of the associated operation--- requires precision logarithmic in $a_{max}$, so the constructed programs remain polynomial in the input size. Analogously to the $n$ integers create  $n$ programs whose costs $1-F_H~ \eqref{eq:hellinger-fidelity}$ are $a_1^*,\dots,a_n^*$. We use two identical quantum devices as the partitioned multi-sets. By assuming theoretical devices with $n$ arbitrary calibrated native operations, one can construct the necessary cost values. Each program for a value is then just the associated calibrated operation. Furthermore, we assume that all programs have the same fidelity across the devices. While this does not hold physically, it describes a valid special case.
        Deciding whether the programs can be split into two subsets with total cost exactly $\sum_i a^*_i/2$ is the \textit{Partition} question.
        Thus, optimal device allocation solves \textit{Partition}, implying \textit{device selection} is NP-hard.
        
        \item \textit{Phase or pass ordering problem is NP‑hard}: Finding the best order of compiler optimization passes is itself a classic NP-hard problem.
        Wang et al. prove it by reducing Phase ordering to the halting problem and also prove undecidability~\cite{wang2024phaseorderingproblemfinding}.

    \end{enumerate}

    Both reductions show that device selection and phase ordering are NP-hard. The combined problem is also NP-hard, as device selection is the special case with only one available phase combination, and vice versa; limiting to only one device resolves to the Phase ordering problem.
    Therefore, the combined problem is at least as hard as NP-hard.

\end{proof}

In practice, this NP-hardness translates into a range of hardware-dependent challenges in quantum program compilation, including:

\begin{itemize}
    \item Translating input program to a required exchange format.
    \item Optimizing the overall program depth respecting coherence constraints from the device.
    \item Decomposing quantum instructions into native gate sets.
    \item Mapping the program to device-specific qubit connectivity constraints.
    \item Reducing the total number of operations in a program.

\end{itemize}

Managing these tasks requires applying various compilation and optimization passes that are compatible with a target quantum device.
However, not all passes are beneficial for arbitrary quantum programs; in fact, certain combinations of passes can degrade performance. 
Therefore, it is necessary to treat each quantum job as an instance of a specific program class and to determine a tailored pass sequence that optimizes compilation while yielding a quantum program compliant with the target device's underlying architecture.

The task of the selector in quantum program execution is effectively a \gls{dse}, as illustrated in Fig.~\ref{fig:problem_space}. Given an input quantum program, we assume that relevant circuit-level features are extracted prior to making decisions about \textit{device selection} or \textit{phase ordering}. These features define the design space that may include metrics such as the number of qubits, program depth, and other static properties, as proposed by Tomesh~et~al. in SupermarQ~\cite{tomesh2022supermarq}. Based on this design space, a unified scheduler must choose a target quantum accelerator (\textit{device selection}) and a suitable sequence of compilation passes (\textit{phase ordering}), together forming the search space. Each combination of decisions leads to outcomes characterized by execution objectives, such as compilation time and expected fidelity. These objectives can be treated separately or aggregated into a unified evaluation metric for decision-making. In general, we formalize the joint \textit{device selection} and \textit{phase ordering} across the following key dimensions:
\begin{enumerate}
    \item \emph{Jobs}: Incoming quantum programs with varying structures and resource requirements.
    \item \emph{Device Specs}: Dynamic backend information including queue lengths, calibration data, and noise characteristics.
    \item \emph{Passes}: Available compiler transformations and program optimization strategies.
    \item \emph{Objectives}: Measurable performance indicators such as circuit depth, execution fidelity.
\end{enumerate}

As shown in \textsc{Theorem}~\ref{theorem:partition}, the proposed joint \textit{device selection} and \textit{phase ordering} scheme can be mapped to a variant of the \textit{Partition} problem. 
Moreover, since the \textit{Partition} problem is NP‑complete, exact solutions do not scale, whereas heuristic and approximation algorithms can efficiently find Pareto-optimal candidate solutions.
This interdependence between the \textit{device selection} and the \textit{phase ordering} problems is also referred to as a bootstrapping dilemma, and is also known in programming languages~\cite{reynolds:bootstrapping2003}.
The mutual dependency complicates modular optimization and motivates our proposal for a unified selector that jointly reasons over both device compatibility and compiler pass strategies within a single decision-making framework.
In this paper, we argue that such a unified approach, driven by \gls{rl} models, is essential to navigating the multi-objective and dynamically changing quantum execution landscape. This selector is toward Pareto-efficient configurations that outperform sequential or isolated heuristics.

\section{Related Work} \label{sec:related_work}
A recent survey illustrates the wide range of applications for \gls{rl}
algorithms in the domain of quantum technology~\cite{bukov2026rl}.
\gls{rl} algorithms are already used to learn compiler optimization decisions
in production compilers for classical programs, as demonstrated by MLGO~\cite{mlgo} for LLVM.
Another such project is AutoPhase~\cite{autophase}, in which an \gls{rl} agent
learns to solve the phase ordering problem for LLVM.
Bacher~et~al.~\cite{qrrms} discuss the gap in the treatment of quantum
resources within \gls{hpc} resource management systems and introduce a Slurm
plugin that makes quantum backends schedulable alongside classical compute
nodes.
Kulkarni and Chaudhary~\cite{kulkarni_dist_sched} show the importance of
scheduling in distributed quantum systems.
It is for such settings that we develop the unified scheduler, which
distributes and compiles a task such that, for example, the expected fidelity
is maximized.
Furthermore, several studies have explored \gls{ml} and \gls{rl} approaches to optimize quantum circuit compilation and device selection.
Fösel~et~al.~\cite{fösel2021quantumcircuitoptimizationdeep} propose a deep \gls{rl} agent that learns hardware‐aware circuit transformations, achieving notable reductions in gate count and circuit depth. 
Quetschlich~et~al. address similar challenges in two complementary works: first, by predicting favorable compilation options via supervised learning~\cite{Quetschlich_pred}, and second, by employing \gls{rl} to discover optimized pass sequences across multiple compilers~\cite{Quetschlich_compile}.
TuniQ~\cite{tuniq} is an \gls{rl}-based project for compilation that accounts for compilation time and fidelity, but it is based on Qiskit.
TuniQ follows a divide-and-conquer strategy that separates the phase ordering problem into subproblems, addressing mapping, decomposition, optimization, and further compilation stages, thereby limiting the available passes at each stage.

More recently, the \textit{\gls{mqt} Predictor} framework proposed \textit{device selection} with \textit{phase ordering} customization, demonstrating improved fidelity and reduced depth performance on diverse quantum backends~\cite{MQT_Pred}.
\textit{\gls{mqt} Predictor} focuses on a hybrid scheduling pipeline where \textit{device selection} and \textit{phase ordering} optimization are treated as two separate learning tasks.
A supervised model is trained for \textit{device selection}, while a distinct \gls{rl} agent is trained for \textit{phase ordering}. 
While effective in their respective domains, these components operate independently and are trained in isolation. 
Furthermore, \textit{\gls{mqt} Predictor} does not incorporate dynamic backend conditions, such as queue length, device availability, or runtime calibration data, in real-time systems.
For the intermediate representation, several MLIR~\cite{lattner2020mlir} dialects exist.
QSSA~\cite{qssa} and QIRO~\cite{qiro} expose the quantum data flow explicitly,
while further dialects are provided by the Catalyst compiler~\cite{catalyst}
and by CUDA-Q in the form of the Quake dialect~\cite{nvidia2024quake}.
In our work, we build on the Quake dialect, rather than using existing quantum pass pipelines.
The approaches discussed above motivate a unified selector, which jointly optimizes \textit{phase ordering} and \textit{device selection}.
This allows our scheduler to reason over compiler transformations and hardware constraints, while also adapting to live system status. 
Our paper motivates multi-objective optimization across fidelity, execution latency, and compilation effort, targeting more effective quantum execution strategies by addressing the tasks as a combined challenge.

\section{A Software Approach Toward Unified Scheduling Framework}\label{sec:unified_scheduler}
In this section, we present our approach towards a unified scheduling framework for optimizing the execution of quantum programs on \gls{nisq} devices.
While the long-term vision is to integrate \textit{Device selection} and \textit{Phase ordering} into a single intelligent system, currently we have two separate components that are planed to be integrated.
This module uses supervised learning techniques to make informed decisions about which quantum device to target, taking into account both program features and real-time device conditions. 
This is our foundational building block in the broader goal of constructing a unified scheduler, which is capable of making decisions across the entire quantum execution pipeline. 
In the following subsections, we first describe the supervised learning-based \textit{device selection} module and then outline our conceptual design and strategy for extending this work into a unified scheduling framework.

\subsection{Device-selection Component}\label{subsec:dev_selector}

We collected a training dataset for our \gls{ml}-based \textit{device selection} module.

Each of the data set's entries consists of the execution data of each input program on available devices, shot distribution results, and execution time on each device.
Additionally, the program is simulated extensively on a simulator to calculate its noiseless reference distribution.
To compute fidelity for labeling, we compare noisy execution results against the noiseless simulation using Hellinger fidelity~\cite{jarzyna2020geometric}, shown in Eq.~\ref{eq:hellinger-fidelity}, which is divided by the execution time duration.

In detail, we assume $\mathcal{K}$ to be the set of all discrete outcome indices (e.g., measurement bitstrings). The following equations show how the outputs are calculated and labeled for training the \textit{target device selection} component.

\begin{align}
    F_H(P,Q) &= 1 - H(P,Q)^2 = \sum_{k\in\mathcal{K}}\sqrt{\,p_k\,q_k\,}.\label{eq:hellinger-fidelity}%
\end{align}

where:
\begin{itemize}
    \item The Hellinger distance between $P$ and $Q$ is defined as
  \begin{equation}
    H(P,Q) = \frac{1}{\sqrt{2}}\sqrt{\sum_{k\in\mathcal{K}}\left(\sqrt{p_k}-\sqrt{q_k}\right)^2},
    \label{eq:hellinger-distance}
  \end{equation}
  
  which quantifies the dissimilarity between two discrete probability distributions~\cite{jarzyna2020geometric}.
    \item $p_k = \frac{SD_P[k]}{N_P}$ and $q_k = \frac{SD_Q[k]}{N_Q}$ are the normalized empirical probabilities for each outcome $k$.
  \item $SD_P[k]$ and $SD_Q[k]$ are the raw shot distributions of outcome $k \in \mathcal{K}$ from two probability distributions $P$ and $Q$  respectively (in this case $P$ is a device and $Q$ is the noiseless simulation).
  \item $N_P = \sum_{k \in \mathcal{K}} SD_P[k]$ and $N_Q = \sum_{k \in \mathcal{K}} SD_Q[k]$ are the total number of samples (shots) collected for $P$ and $Q$.
  \item $t_{\mathrm{exec}}$ is the total execution time (e.g., runtime of the quantum program on hardware).
  \item The final \textbf{score} is defined as the tuple fidelity and execution time:
  \[
  \text{\textbf{score}} = ({F_H(P, Q)},t_{\mathrm{exec}}).
  \]
\end{itemize}
The component's goal is to predict the Hellinger fidelity for each device and rank them by suitability for executing the program.
Our \textit{device selection} component applies various \gls{ml} methods, including \gls{mlp}, \gls{rf}, \gls{knn}, and \gls{svm}.

The \textit{device selection} component is structured into subroutines for dynamic labeling, training, and model evaluation.
This design enables flexible, efficient adaptation of models and labeling strategies with minimal overhead.
Importantly, the exact value of the Hellinger fidelity is of secondary concern. 
The key objective is to preserve the \textit{relative structure} across devices. 
Systematic prediction errors that scale all outputs by a common factor \(\eta\) are acceptable, as long as the \textit{ranking of devices} and \textit{relative score ratios} (e.g., the Hellinger fidelity for a given program and device, for a set of devices $\{d_i\}$: \(F_{Hd_1} > F_{Hd_2}\) and \(F_{Hd_1}/F_{Hd_2}\) ) remain consistent. We discuss the experimental results of our \textit{device selection} in Section~\ref{sec:experiments}.
\subsection{MQSS Pass Suite}\label{subsec:pass_suit}
We extend the  \gls{mqss} compilation pass suite with a significantly expanded set of passes
and make our fork openly available under \footnote{\url{https://anonymous.4open.science/r/XXXX-Passes-Suite-614D/README.md}}.
Together with the pass selector, this yields a clean separation between action selection
and execution across two languages: the passes are implemented in C++, while the
\texttt{pybind11} interface exposed by the selector allows selection policies to be
developed in Python, independently of the underlying pass implementations.
This design enables reproducible phase-ordering experiments.
Passes are categorized into three types:
\begin{itemize}
    \item \textbf{Cancellations:} Passes that simplify operation sequences by reducing
          them to the identity operation.
    \item \textbf{Transformations:} Passes that structurally rewrite the program while
          preserving its computational semantics (e.g., qubit mapping and control
          flow mutations).
    \item \textbf{Decompositions:} Passes that decompose operations into equivalent
          sequences of more primitive operations.
\end{itemize}
At the data generation time the suite had in total 92 passes, for supporting all necessary operations in the Quake dialect of CUDA-Q~\cite{nvidia2024quake} and a mapping pass. As the git project is under development the total number of passes may change.

\subsection{Phase-ordering Component}\label{subsec:pass_combination}
Manual optimization of compilation passes across many programs and devices is infeasible, especially in dynamic environments where devices change or new compilation passes are introduced. Each such change would require re-running an exhaustive \gls{dse}, making the approach impractical at scale.
\gls{rl} offers a possible solution by learning generalizable strategies that adapt to both program characteristics and backend configurations. Our \gls{rl} implementation focuses on the Actor-Critic algorithm~\cite{murphy2023probabilistic}. The model's architecture comprises  two neural networks:
\begin{itemize}
  \item \textbf{Actor network:} predicts the next compilation pass by applying a policy 
    \[ \pi(s_t, a_t)\,, \]
    where \(s\) denotes the current state and \(a\) is the chosen pass action. This learns the policy $\pi(s, a)$ that maps a given state $s$ (e.g., program representation and selected device context) to the action $a$ corresponding to the next compilation pass to apply.
    Additionally, the actor has the action \textit{"Finish"} in which the actor stops the compilation by choice and claims no further optimization is necessary.
  \item \textbf{Critic network:} estimates the state‐value function 
    \[ V(s_t)\,, \]
    which acts as feedback, by evaluating how advantageous it is to transition from state \(s\) to the next state \(s'\).
    Specifically, this network estimates the value function $V(s)$ to predict how promising the current state is in terms of achieving long-term rewards, given the current policy (e.g., program fidelity, compilation time, or depth). 
    Since the critic's target values depend on the current policy, which is itself being updated by the actor, both networks are effectively chasing a moving target --- a key source of training instability in Actor-Critic methods.
\end{itemize}

\begin{figure*}[t!]
    \centering
    \begin{subfigure}[t]{\columnwidth}
        \includegraphics[width=\textwidth]{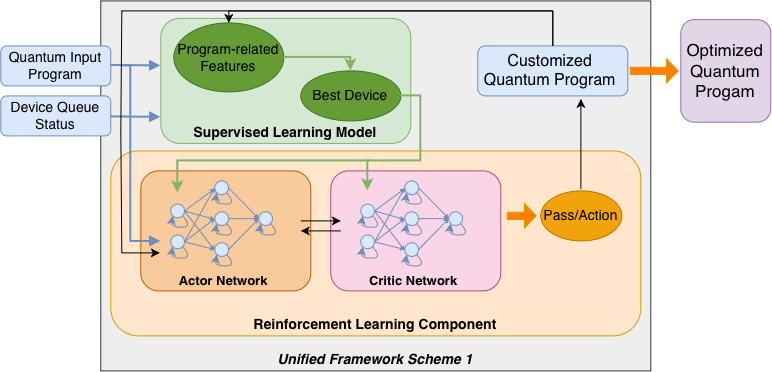}
        \caption{} 
        \label{fig:Unified_scheduler_scheme_V1}
    \end{subfigure}%
    \hspace*{10mm}%
    \begin{subfigure}[t]{\columnwidth}
        \includegraphics[width=\textwidth]{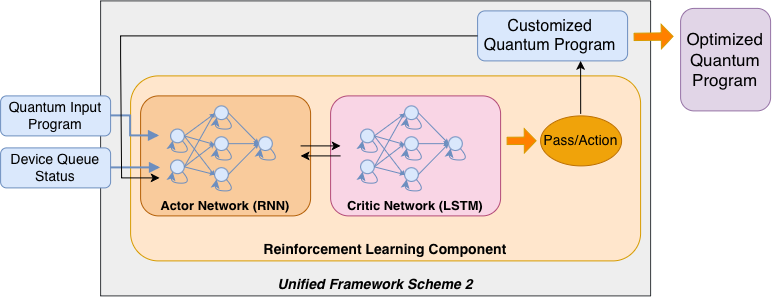}
        \caption{}
        \label{fig:Unified_scheduler_scheme_V2}
    \end{subfigure}%
    \caption{a) Conceptual design of the unified scheduler (V1) - Hybrid Supervised + \gls{rl}, where \textit{device selection} and \textit{phase ordering} are handled by interlinked modules. 
    b) Conceptual design of the unified scheduler (V2) - Fully \gls{rl}-based, combining \textit{device selection} and \textit{phase ordering} into a single action space for \gls{rl} training.}
    \label{fig:conceptual_designs}
\end{figure*}

Both models' input state includes a representation of the program (e.g., in an \gls{ir} or feature-based format) and a list of candidate devices selected by a prior \textit{device selection} module.

The output is the next compilation pass best suited to the given device and program characteristics, according to the actor's learned policy.

The \gls{rl} environment is defined by the set of currently available quantum devices and the collection of implemented compiler passes. This environment provides feedback (reward signals) based on the performance of the compiled program under different pass sequences, targeting objectives like fidelity, program depth, operation count, compilation time, or execution time. 
The state space defined by all devices,  arbitrary quantum programs, and different stages in the pass sequence is, in practice, infinitely large. To obtain stable and sample-efficient advantage estimates during critic training, we use \gls{gae}~\cite{murphy2023probabilistic}.
 In \textit{actor-critic} and other policy-gradient methods, a reliable estimate of the advantage
    function improves sample efficiency and stabilizes learning. The one-step \gls{td} -residual is defined as:
    \[
      \delta_t \;=\; r_t + \gamma V(s_{t+1}) - V(s_t).
    \]
    The \gls{gae}\((\gamma,\lambda)\) estimator mixes multi-step \gls{td} errors as
    \[
  \hat{A}_t^{\mathrm{GAE}(\gamma,\lambda)}
  \;=\; \sum_{l=0}^{\infty} (\gamma\lambda)^l \,\delta_{t+l}.
\]
\label{eq:gae_function}
    and provides a tunable bias–variance trade-off via \(\lambda\in[0,1]\). In practice, \gls{gae} is commonly used with \textit{actor-critic} algorithms and combined with
    advantage normalization and value-function regularization.
    
\subsubsection{Reward Functions}
The reward function design is grounded in three desirable 
properties.
These properties formalize our expectations of 
a compilation policy with respect to executability, 
structural reduction, and compilation efficiency.
\paragraph{Executability.}
An environment state $s$ is executable if all constraints imposed by the 
target hardware are satisfied, formally:
\[
e(s) = 1 \iff u(s) = 0 \land m(s) = 1,
\]
where $u(s)$ denotes the number of non-native operations 
and $m(s)$ indicates whether the program has a valid physical mapping. 
A reward function satisfies this property if executable states have a higher reward value than non-executable states.
\paragraph{Early stopping}
A reward function satisfies the early stopping property if, for any two rollout runs, reaching a terminal state at 
steps $t < t'$ respectively, with  equivalent executable final states $s_t = s_{t'}$, the terminal rewards satisfy:
\[
r_{\text{done},t} > r_{\text{done},t'}.
\]
This encodes the preference for shorter compilation paths when the final compiled program is equivalent.
\paragraph{Structural optimization}
A reward function satisfies the structural optimization property if it induces monotonic improvement with respect to a set of metrics. Namely, operation count and depth.
Let $d(s)$ denote program depth and $g(s)$ the total operation count. Then improvements are defined as:
\begin{equation}
\begin{split}
&\bigl(d(s_{t+1}) < d(s_t) \lor g(s_{t+1}) < g(s_t)\bigr) \\
&\quad\land\; \bigl(u(s_{t+1}) = u(s_t)\bigr)
\;\Rightarrow\; r_{t+1} > r_t
\end{split}
\end{equation}
where $u(s)$ denotes the number of unsupported operations.
Conversely, degradations in either metric do not yield positive reward contributions.
\begin{equation}
\begin{split}
&\bigl(d(s_{t+1}) \geq d(s_t) \land g(s_{t+1}) \geq g(s_t)\bigr) \\
&\quad\land\; \bigl(u(s_{t+1}) = u(s_t)\bigr)
\;\Rightarrow\; r_{t+1} \leq r_t
\end{split}
\end{equation}
\newline
\begin{equation}
    R_t=\phi_t -\phi_{t-1} + r_{done}
   \label{eq:potential} 
\end{equation}
Based on these properties, several reward functions were tested. The group of functions with the best result was the potential functions.
This principle is that these functions assign at each step a score to the state.
The reward value is defined by Equation \ref{eq:potential}.
In this group of potential reward functions, the best-performing candidate tested is given by 
\begin{align}
\phi_t &= c_0 \!\left(1 - \tfrac{d_t}{c_d d_{\text{init}}}\right)
        + c_1 \!\left(1 - \tfrac{g_t}{c_g g_{\text{init}}}\right)
        + c_2 \!\left(1 - \tfrac{u_t}{u_{\text{init}}}\right) \label{eq:phi}\\[4pt]
\phi_{\max} &= c_0 \!\left(1 - \tfrac{1}{c_d}\right)
             + c_1 \!\left(1 - \tfrac{1}{c_g}\right)
             + c_2 \label{eq:phi_max}\\[4pt]
\text{effort}_t &= 1 - \tfrac{\text{step}_t}{\text{step}_{\max}} \label{eq:effort}
\end{align}

\begin{equation}
r_{\text{done}} =
\begin{dcases}
  \dfrac{\phi_t + \text{effort}_t}{\phi_{\max}} \times 5
    & f{=}1,\; e^{(t)}{=}1 \\[6pt]
  -3
    & f{=}1,\; e^{(t)}{=}0 \\[2pt]
  0 & \text{otherwise}
\end{dcases}
\label{eq:r_done}
\end{equation}
where $d_t$, $g_t$, and $u_t$ denote the program depth, total operation count,
and unsupported operation count at step $t$, respectively.
$f \in \{0,1\}$ indicates whether the actor selected the finish action,
and $e \in \{0,1\}$ indicates whether the resulting program is executable.
The coefficients $c_0, c_1, c_2, c_d, c_g$ are heuristically derived scaling 
factors used to balance the contribution of each term.
The condition $u(s_{t+1}) = u(s_t)$ ensures that structural 
improvements are attributed independently of changes in 
executability, preventing conflation of the two reward contributions.
\newline
Since \textit{actor-critic} networks are separated, different hyperparameters and architectures offer room for further optimization.
RNN and LSTM models may improve predictions compared to standard \glspl{mlp} due to their ability to retain sequential state information. 
However, a baseline implementation using a standard \gls{mlp} will serve as our starting point for the evaluation.
It is now possible to estimate the lower bound of the probability for a successful compilation run, given a quantum program $n$ with unsupported operation types.
The actor can decide from a set of 94 actions: the 92 compilation passes, a
mapping pass that fits the program to the physical constraints of the target
device, and an action by which the actor signals it has finished its
compilation. The mapping pass is listed separately because it is the only pass that takes
input parameters, namely the connectivity constraints of the target device. With this set, the actor can compile for an arbitrary device and every native gate set that is supported by the Quake dialect.
For tractability of the estimation, the pass count is rounded up to 100.
A successful compilation is defined by being executable, as only if the program is executable, then the Finish action is positively rewarded.
It is also assumed that the $n+1$ actions are order-independent, which holds except for a small number of passes with explicit ordering dependencies, such as the $\texttt{Map} \to \texttt{SwapToCxCxCx}$ sequence.
For a rollout length of $k$ with $k\geq n+2$, the probability of a successful rollout is given by Eq.~\ref{eq:Finish_estimation}.
\begin{equation}
    P_{\text{success}} \approx
    \sum_{x=0}^{k-n-2}
    \underbrace{\prod_{i=0}^{n} \frac{n+1-i}{100}}_{P_{\text{decomposition+map}}} 
    \cdot 
    \underbrace{\left(\frac{100-(n+2)}{100}\right)^{x}}_{P_{\text{other Passes}}}
    \cdot 
    \underbrace{\frac{1}{100}}_{P_{\text{finish}}} 
    \label{eq:Finish_estimation}
\end{equation}
This equation highlights how unlikely a successful compilation rollout is under random action selection, which is a reasonable assumption for an untrained actor, thereby making the terminal reward signal difficult to discover through exploration alone.
To address this issue, we propose the DAEP approach.
In general, during a \gls{rl} approach, exploration and exploitation compete in order to find a near-optimal policy.
We add a third competitor, a guidance term in the loss function, which decays faster than the two other components and induces a constraint in the search space for policies.
\begin{figure}[h]
\centering
\begin{tikzpicture}[
    force/.style={-Latex, thick, line width=1.5pt},
    label/.style={font=\small\bfseries, align=center}
]

\node[draw, circle, minimum size=1.2cm, fill=gray!15] (agent) at (0,0) {Agent};

\draw[force, blue!70!black] (0, 2.8) -- (agent);
\node[label, blue!70!black] at (0, 3.3) {DAEP \\ \small(Guidance)};
\node[font=\footnotesize, blue!50!black] at (1, 2.6) {$p_{DAEP} \to 0$};
\node[font=\footnotesize, blue!50!black] at (1, 2) {$\alpha_{DAEP} \to 0$};

\draw[force, green!60!black] (-2.8, -1.8) -- (agent);
\node[label, green!60!black] at (-3.2, -2.3) {Entropy \\ \small(Exploration)};
\node[font=\footnotesize, green!50!black] at (-2.2, -0.8) {$\alpha_{H} \to 0$};

\draw[force, red!70!black] (2.8, -1.8) -- (agent);
\node[label, red!70!black] at (3.2, -2.3) {Advantages \\ \small(Exploitation)};
\node[font=\footnotesize, red!50!black] at (2.2, -0.8) {$\eta \to \delta$};

\end{tikzpicture}
\caption{Decay schedules over training epochs. The DAEP trigger probability
$p_{DAEP}$, the DAEP magnitude coefficient $\alpha_{DAEP}$ and the entropy
coefficient $\alpha_{H}$ decay to zero, so that guidance and exploration
vanish and the actor converges on exploiting the critic's estimates. The
learning rate $\eta$ decays to a $\delta > 0$ and thus remains active
throughout training.}
\label{fig:warmstart_triangle}
\end{figure}

In regular Actor-Critic schemes, the total actor loss is calculated from the
policy-gradient term, which is scaled by the critic-based advantage estimate,
and an entropy term.
The entropy term is calculated from the action distribution's entropy weighted
by a coefficient $\alpha_{H}$.
Both the entropy coefficient and the learning rate $\eta$ decay during training
so that the actor can converge on a policy.
The new guidance component is integrated via two different coefficients: one
determines the trigger probability during a rollout, $p_{DAEP}$, the other,
$\alpha_{DAEP}$, is the magnitude coefficient. Both decay on a faster schedule
than the learning rate.
During a rollout, the actor's selection gets overwritten according to a
predefined guidance policy with probability $p_{DAEP}$.
This introduces a new loss component, namely a negative log-likelihood term,
which is evaluated only on those steps at which the guidance policy actually
replaced the actor's selection, and which is added to the total loss with the
coefficient $\alpha_{DAEP}$, analogous to standard behavior cloning.
The two coefficients separate \emph{how often} guidance is applied from
\emph{how strongly} the policy is pulled towards it.
A guided action is therefore always executed, but not necessarily reinforced:
as $\alpha_{DAEP}$ decays, the imitation signal weakens relative to the
advantage term, and the actor may assign a higher probability to a different
action than the one prescribed by the guidance policy.
This yields a gradual transition away from the guidance policy rather than the
abrupt one a single coefficient would produce.

\subsection{Unified Scheduling Framework}\label{subsec:unified_framework_design}

To address the bootstrapping dilemma, described in Section~\ref{sec:prob_statement}, we propose a unified scheduling framework with two trial-error bootstrapping schemes in which both device selection and pass selection are optimized jointly.
The overall idea is to frame the problem as a \gls{rl} task, where the agent interacts with an environment defined by:
\begin{itemize}
    \item \textit{State}: quantum program-related features (e.g., qubit count, depth), current backend status (e.g., calibration data, queue length), and any intermediate compilation results.
    \item \textit{Action}: \textit{device selection}, \textit{pass selection}, or both simultaneously.
    \item \textit{Reward}: performance metrics such as execution fidelity, compilation overhead, and latency.
\end{itemize}
In this context, we can let target devices be selected randomly at the starting stage, then get feedback from the \gls{rl} environment. In later stages, we can reuse input data and update the estimated label.
The \gls{rl} agent is implemented as an \textit{actor–critic} model, with the \textit{actor} network learning a policy to select the next action (device and/or pass) and the \textit{critic} network estimates the value of the current state. 

We currently consider two conceptual schemes for developing such a unified scheduler:
\begin{itemize}
    \item \textit{Approach 1}  – Hybrid Supervised + \gls{rl}: shown in Fig.~\ref{fig:Unified_scheduler_scheme_V1}, both \textit{device selection} and \textit{pass selection} are delegated to different sub-architectures.
    The intermediate results from one component (\textit{device selection} or \textit{pass selection}) are used to refine the training data at each training epoch.
    The output of the trained \textit{device selection} module is provided as part of the state input to another \gls{rl}-based module tailored for \textit{phase ordering}.
    \item \textit{Approach 2}  – Fully \gls{rl}-based: shown in Fig.~\ref{fig:Unified_scheduler_scheme_V2}, \textit{device selection} and \textit{phase ordering} are combined directly to form a new, larger action set. 
    The \gls{rl} agent then learns to choose from this combined action set.
\end{itemize}
Overall, they illustrate two possible paths toward a unified scheduler: 
one retaining modular specialization between \textit{device selection} and \textit{phase ordering} (Approach 1) and one performing joint optimization in a single policy (Approach 2).

\section{Experiments}\label{sec:experiments}
\subsection{Device selection}
\begin{figure*}[t]
    \centering
    \includegraphics[width=\textwidth]{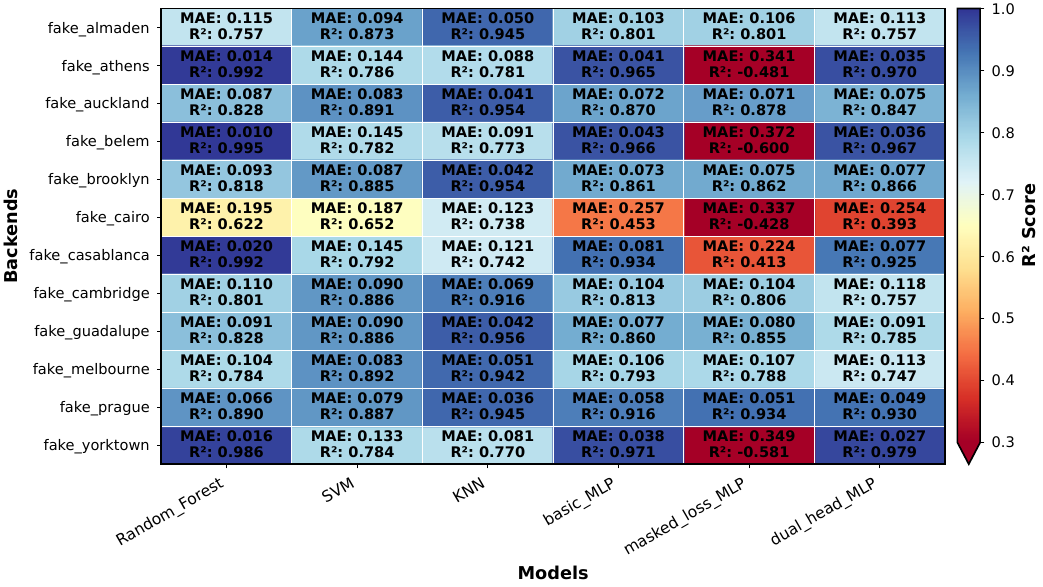}
    \caption{Prediction analysis on initial models for the 128 MQT-Bench program set}
    \label{fig:heatmap}
\end{figure*}
For the initial experiments on \textit{device selection}, we use two different data sets.
The first set consists of 128 programs generated by \textit{MQTBench}~\cite{quetschlich2023mqtbench}, a component of \gls{mqt}. These programs were executed on twelve mock devices from \texttt{ibm-fake-backend} provider, with different connectivity characteristics and native gate sets.
The second set contains 2,880 random benchmarking programs for the qubits 1-12, 6 folds and various samples on two real devices.

For both sets, we use an already optimized compiler 
for program compilation, and we use a 
 set of models 
 to determine good candidates for the model. 
 In particular, we chose an \gls{mlp}, an \gls{rf}, a 
 \gls{svm}, and a \gls{knn}.
 For each \gls{ml} approach, we use a short grid search for optimization.
By analyzing the output of the \texttt{basic\_MLP} it could be seen that one challenge was to predict programs that are infeasible to execute on the respective mock device (e.g., when the job needs more qubits than available on the device).
One of two approaches for improvement consists of modifying the loss function of the \texttt{masked\_loss\_MLP}, by adding weights to the loss with respect to the ground truth labels.
The second is a \texttt{dual\_headed} approach, here one head predicts feasibility and the other head predicts the expected Hellinger fidelity shown in Eq.~\ref{eq:hellinger-fidelity}.
Eq.~\ref{eq:MAE} displays the \gls{mae} for each model's prediction. 
Note that \gls{mae} alone is not expressive enough, as it averages errors and does not distinguish between many small errors and few large deviations. 
\begin{equation}
    \text{MAE} = \frac{1}{n} \sum_{i=1}^{n} \left| y_i - \hat{y}_i \right|
    \label{eq:MAE}
\end{equation}
Therefore, we additionally use the $R^2$ Eq.~\ref{eq:R2} metric. It takes larger deviations on single samples into account by calculating the ratio of \gls{mse} to variance.
\begin{equation}
    R^2 = 1 - \frac{\sum_{i=1}^{n} (y_i - \hat{y}_i)^2}{\sum_{i=1}^{n} (y_i - \bar{y})^2},
    \label{eq:R2}
\end{equation}

where:
\begin{itemize}
    \item \( y_i \): The true value for the $i$-th sample.
    \item \( \hat{y}_i \): The predicted value for the $i$-th sample.
    \item \( \bar{y} = \frac{1}{n} \sum_{i=1}^{n} y_i \): The mean of the true values.
\end{itemize}
Fig.~\ref{fig:heatmap} shows the results on the \gls{mqt} dataset.
It displays the model's fidelity predictions for the 128 \gls{mqt}-bench programs on the chosen twelve backends.

The results show a worse performance by
\gls{mlp}, %
compared to the other models. %
We expect fine tuning to improve the predictions. %
While \gls{svm} and \gls{knn} show good results for fidelity predictions, they may not be a fitting choice for algorithms in the unified scheduler. Their good performance is based on the limited data set. In a full deployment, we expect an even more diverse program set and a huge variance in program depths and, hence, we anticipate that these algorithms will perform less well in the scheduler in practice than they did in these experiments.
That makes \gls{rf} the currently most suitable candidate on this program set.
\begin{figure}[t]
    \centering
    \includegraphics[width=0.8\columnwidth]{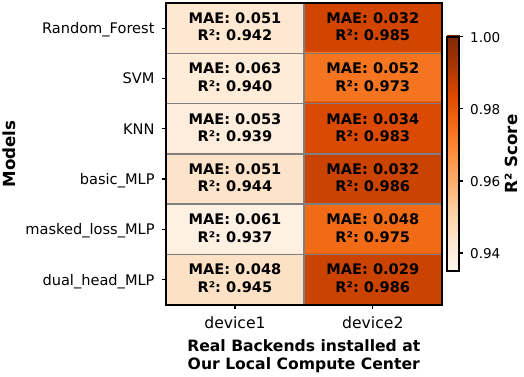}
    \caption{Prediction analysis on initial models for the 2,880 RB program set}
    \label{fig:heatmap_real}
\end{figure}
Fig.~\ref{fig:heatmap_real}  shows the result gained from executing the second data set on two real devices installed at our local compute center.
We can see that here all models perform nearly equally in \gls{mae} and $R^2$.
The reason for this is that all programs of the set could be executed on all devices making the training better suited.
This can be seen in the fact that there is nearly no change between basic, masked\_loss and dual\_head \gls{mlp}.
The difference across the devices can be related to device properties and environmental influences of the real devices.
\begin{table*}[!t]
\centering
\renewcommand{\arraystretch}{1.2}
\begin{tabular}{l l r r}
\toprule
\textbf{Category} & \textbf{Metric} & \textbf{Guidance (DAEP)} & \textbf{No Guidance} \\
\midrule
\multirow{6}{*}{\textbf{Finish Action Analysis}}
 & Samples             & 154       & 154      \\
 & Avg. total rewards  & 4.32      & 0.44     \\
 & Finish called       & 153/154   & 0/154    \\
 & Finish $+$            & 146       & 0        \\
 & Finish $-$            & 7         & 0        \\
 & Executable programs & 95.5\%    & 42.1\%   \\
\midrule
\multirow{4}{*}{\textbf{Depth and Operations}}
 & Avg. Depth change   & +115.28   & +134.52  \\
 & Avg. Operation change    & +324.62   & +375.80  \\
 & Depth reduced       & 87/154    & 8/154    \\
 & Ops. reduced        & 93/154    & 15/154   \\
\midrule
\multirow{7}{*}{\textbf{Action Classification}}
 & TP                  & 253       & 870      \\
 & FP                  & 9         & 1044     \\
 & TN                  & 38        & 1688     \\
 & FN                  & 881       & 765      \\
 & Precision           & 96.56\%   & 45\%     \\
 & Recall              & 22.31\%   & 53\%     \\
 & F1                  & 36.25\%   & 49\%     \\
\bottomrule
\end{tabular}
\caption{Chemistry Dataset – Validation Summary comparing DAEP and No Guidance models}
\label{tab:chem_combined}
\end{table*}
\subsection{Pass selection}
In order to train and evaluate the Pass selection component we used a data set containing the Hamiltonians of chemical molecules encoded in quantum programs.
This data set had 1548 samples and 10\% are used as validation sample. Furthermore to ascertain that this is no special case the models were trained for two different Devices with different mappings and native operation sets.
A comparison with other compilers such as Qiskit would only be meaningful at this stage if the action set mirrors the exact pass set available for such a compiler.
Therefore we introduce a tailored set of metrics to evaluate the actor and critic networks.
The actor can decide when it is finished with the compilation on a quantum program by calling the finish action, based on the executability either a positive reward is awarded or a negative reward if the program is not executable when the actor finishes.
Based on this we can measure how efficient the actor is, as it illustrates the
actor's ability to recognize whether this core requirement is fulfilled.
The next part of the metric set captures information on how much the programs increase on average in depth and operation count. 
Additionally it shows in how many samples action were taken that reduced the operation count or the depth.
Thereby revealing how many actions the actor took that optimized the program.
A final metric is a pseudo classification of the critic's output.
By comparing the critic's output with the analytical value calculated via GAE algorithm and the reward values, it is possible to define the class True positive (TP).
Every prediction that only deviates by a small constant $\delta d$ from the analytical value is in this class, as long as the signs match.
If the deviation  is larger than the constant the prediction is categorized as True negative (TN).
For mismatching signs accordingly False negative (FN) is a negative prediction for a positive value and vice versa False positive (FP).
The previously mentioned constant $\delta d$ is not defined over a fraction of the ground truth value as these values can fluctuate depending on the position of the sequence.
Therefore we chose a constant value, the disadvantage of this approach lies in the sensitivity for the classes not only in cases where the sign mismatches but in general.
It directly determines the boundaries for each class.
Together this set of metrics informs about actor's and critic's performance over all verification samples.
\newline
Table \ref{tab:chem_combined} shows the results of our Pass selection component achieved according to the metric set on data set of Hamiltonians of molecules encoded in quantum programs.
The actor learned to terminate when further pass application was not expected to improve the program.
In the one remaining sample, the finish action was not called, which can be attributed to the action horizon limit --- rollouts exceeding 20 steps are truncated.

Similarly in 7 cases, samples used gates which the actor was not familiar. This can be explained by the limited number of samples in data set (1548). 
Furthermore the actor with the DAEP approach yields better compilation performance, as reflected in the smaller resulting depth and operation count on average.
This is shown by comparing the values in the table's columns. Another fact illustrated is, that the critic needs more training.
In the guided model the action classification is very conservative, but compared to the model without guidance the critic performs better.
A reason for this can be found in the number of actions taken, this leads to more training for the critic.
The fundamental reason is that with guidance the actor learns to finish
efficiently with the least amount of actions, whereas in the other case the
rollouts nearly always run until the end of the action horizon.

\section{Conclusions} \label{sec:conclusion}
In this work, we introduced the concept of a \textit{unified scheduler} and presented our approach for its implementation.
Our theoretical analysis demonstrated the feasibility of \textit{jointly selecting a Pareto-optimal combination of a quantum device together with a sequence of compilation and optimization passes} for heterogeneous \gls{hpcqc} software stacks.
To achieve an efficient and practical solution, our approach leverages \gls{ml} algorithms.
We presented experimental results that demonstrate the feasibility of both the \gls{ml} component on \textit{device selection} and the component on \textit{pass selection}.

The data provided empirical evidence that \textit{device selection} can be effectively addressed using \gls{ml} techniques.
In contrast, the experimental data on the pass selection component suggests
that \gls{rl} without additional methods is not a viable approach for training
a network for compilation, at least not under the resource constraints
considered here.
Training the pass selection networks for different devices, suggest that this approach does not only work for specific native operation sets or physical constraints.
By introducing a decaying guidance component during training --- overwriting the actor's actions with decreasing probability and magnitude --- DAEP accelerates convergence and improves reward exploitation.
Through the properties we defined, we can be certain that the actor learns a policy that leads to our intended objective.
The data also reveals the need for more extensive training for the pass selection component.
The combination into a unified scheduler introduces additional complexity, 
but also a key opportunity: by exploiting the interdependency between device 
selection and compilation, the unified approach has the potential to outperform 
pipelines that optimize each stage independently.
Furthermore, the \gls{ml}-based unified scheduler supports continuous improvement 
of predictions through online training on new data acquired during execution.

\begin{acks}

 This work is supported by the German Federal Ministry of Research, Technology and Space (BMFTR) with the grants 13N15689 (DAQC), 13N16063 (Q-Exa), 13N16078 (MUNIQC-Atoms), 13N16187 (MUNIQC-SC), 13N16690 (Euro-Q-Exa), 13N16894 (MAQCS), European fundings 101136607 (CLARA), 101114305 (Millenion), 101113946 (OpenSuperQPlus), 101194491 (QEX), and the Bavarian State Ministry of Science and the Arts through funding, as part of \gls{mqv}, Q-DESSI.
\newline
 We would also like to express our gratitude to Aleksandra Swierkowska for her support.
\end{acks}

\clearpage
\bibliographystyle{ACM-Reference-Format}
\setcitestyle{sort}
\bibliography{references}

\end{document}